\numberwithin{equation}{section}
\newtheorem{theorem}{Theorem}[section]
\newtheorem{lemma}[theorem]{Lemma}
\newcommand{\cC}{{\ensuremath{\mathcal C}} }
\newcommand{\cR}{{\ensuremath{\mathcal R}} }
\newcommand{\bbE}{{\ensuremath{\mathbb E}} }
\newcommand{\bbN}{{\ensuremath{\mathbb N}} }
\newcommand{\bbP}{{\ensuremath{\mathbb P}} }
\newcommand{\bbR}{{\ensuremath{\mathbb R}} }
\newcommand{\bbZ}{{\ensuremath{\mathbb Z}} }
\newfont{\indic}{bbmss12}
\title[Macroscopic diffusion from a Hamilton-like dynamics]
      {Macroscopic diffusion from a Hamilton-like dynamics}
\author[R.\ Lefevere]{Rapha\"el Lefevere}
\address{Laboratoire de Probabilit\'es
 et Mod\`eles Al\'eatoires (CNRS UMR 7599), Universit\'e Paris 7
 -- Denis Diderot, UFR Math\'ematiques, Case 7012, B\^atiment
 Chevaleret, 75205 Paris Cedex 13, France}
\email{lefevere\@@math.univ-paris-diderot.fr}
\begin{document}

\begin{abstract}
We introduce and analyze a model for the transport of particles or energy in extended lattice systems.  The dynamics of the model acts on a discrete phase space at discrete times but has nonetheless some of the characteristic properties of Hamiltonian dynamics in a confined phase space : it is  deterministic, periodic, reversible and conservative. Randomness enters the model as a way to model ignorance about initial conditions and interactions between the components of the system.  The orbits of the particles are non-intersecting random loops.  We prove, by a weak law of large number, the validity of a diffusion equation for the macroscopic observables of interest for times that are arbitrary large, but small compared to the minimal recurrence time of the dynamics.
\end{abstract}

\maketitle

Fick's law of diffusion or Fourier's law of heat conduction describe phenomena which are part of everyday life : think of the diffusion of sugar in a cup of coffee or the exponential decay in time of the temperature of the same hot coffee. The laws of microscopic physics possess features that makes them look contradictory at first sight with the phenomenological laws of macroscopic physics.  In particular, microscopic laws are reversible and when the dynamics is confined in phase space, the Poincar\'e recurrence theorem ensures quasi-periodicity of the orbits. This a well-known problem which, in various guises,  has generated a large number of debates among physicists and mathematicians alike, see for instance \cite{Bricmont,Lebowitz} for excellent reviews.  Boltzmann made a decisive contribution to the issue by insisting on the fact that in large systems, the usual laws of macroscopic physics correspond only to a {\it typical} (with respect to the initial conditions) behaviour and not a uniform one.  One should note that in general the argument can not rest on typicality of the initial conditions alone. There exist quite a few systems for which macroscopic laws of normal diffusion are not obeyed : Fourier's law is not observed in non-interacting gases, lattices of harmonic oscillators or the Toda lattice.
Still, given its pervasiveness, one can not expect ``normal" macroscopic behaviour to depend on delicate microscopic dynamical properties.  On the contrary, in the absence of any information on microscopic initial conditions or interactions reflected in a large-scale structure, Fick's law or Fourier's law seems to be the typical behaviour in Nature.
Therefore, Boltzmann's solution to the problem should  rather refer  to typical initial conditions {\it and} typical dynamics.

We propose  an instructive toy model for Fick's law or Fourier's law observed in systems which have a lattice spatial structure at the microscopic level .  This model is deterministic, reversible, periodic, conservative and amenable to a full rigorous treatment.   Randomness enters the model as a way to model ignorance about microscopic initial conditions and microscopic interactions between the components of the system. With large probability with respect to this randomness, as the number of components increase, one can show that a Boltzmann equation holds and is equivalent to a macroscopic diffusion.  The key point is that the minimal recurrence time of the dynamics increases as the number of components does.  This prevents any ``recollision" occurring for times smaller than the minimal recurrence time.  The model is reminiscent of the 
Kac ring model \cite{Kac} but possesses a rich orbit structure that makes it more similar to Hamiltonian dynamics.  The orbits of the particles are non-intersecting random loops.
\section{The dynamics and its first properties}
Le us consider a finite interval in $\bbZ$ :
$$
\Lambda_N=\{i\in\bbZ:|i|\leq N\}.
$$
To each site of $i\in\Lambda_N$, we attach a ring $\cR_i$ carrying $R$ sites $k\in\{1,\ldots,R\}$.
The model consists of particles moving on 
$$
\cC_N=\prod_{i\in\Lambda_N}\cR_i=\{(k,i):k\in\{1,\ldots,R\},\;i\in\Lambda_N\}.
$$

The second ingredient of the model is the presence of ``scatterers" that are located in between pairs of sites $(k,i)$ and $(k,i+1)$. We define variables $\xi(k,i)$ taking values in $\{0,1\}$ and $\xi(k,i)=1$ if and only if there is a scatterer between sites $(k,i)$ and $(k,i+1)$.

We put particles on the sites of $\cC_N$ and a site carries at most one particle. At fixed discrete times, all particles move forward on the rings. A particle located at site  $(k,i)$, namely at site $k$ on the ring $\cR_i$  will jump to site $k+1$ on ring $\cR_{i+1}$ (resp. $\cR_{i-1}$), if and only if the following conditions are simultaneously satisfied.
\begin{enumerate}
\item There is  a scatterer between $(k,i)$ and $(k,i+ 1)$ (resp. $(k,i- 1)$), namely $\xi(k,i)=1$ (resp. $\xi(k,i-1)=1$).
\item There are no other scatterers around that pair. 
\end{enumerate}
\begin{figure}[thb]
\includegraphics[width = .89\textwidth]{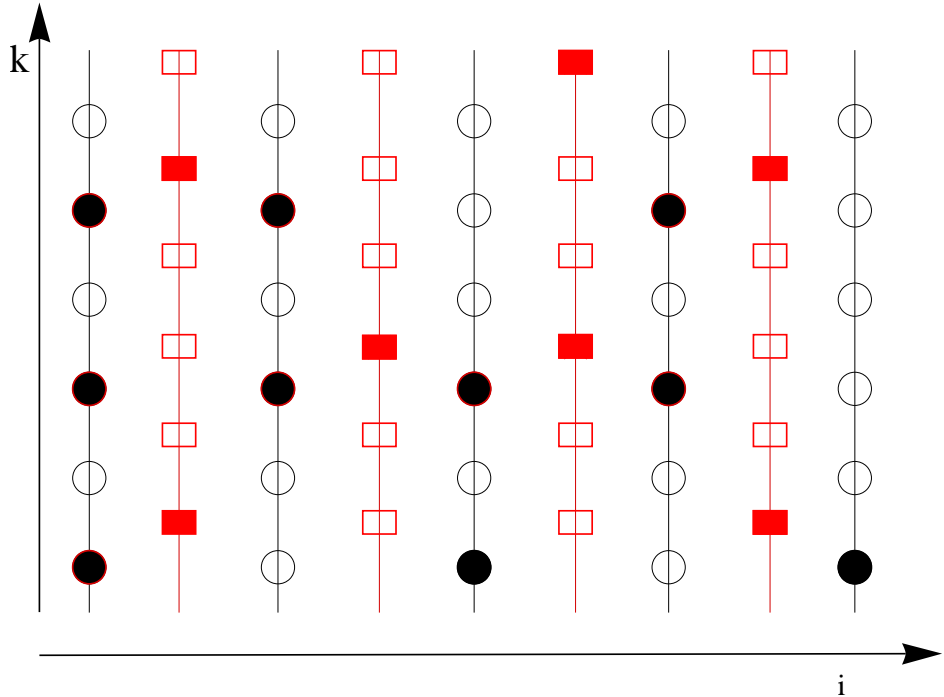} 
\caption{A configuration of particles (black disks) and scatterers (rectangles) on five rings. Periodic boundary conditions are imposed on the vertical direction.}
\end{figure}

In every other case, the particle located at site $(k,i)$ simply moves forward on its own ring to $(k+1,i)$.
Formally, we define the map $\tau:\cC_N\to\cC_N$ giving the one-step evolution of a particle located at $(k,i)\in\cC_N$ :
\begin{eqnarray}
\tau(k,i)&=&J(k,i)(k+1,i+1)+J(k,i-1)(k+1,i-1)\nonumber\\
&+&(1-J(k,i))(1-J(k,i-1))(k+1,i)
\label{taudef}
\end{eqnarray}
where 
\begin{equation}
J(k,i)=\xi(k,i)(1-\xi(k,i-1))(1-\xi(k,i+1))
\label{defJ}
\end{equation}
and we set 
$$
\xi(k,-N-1)=\xi(k,N)=0 .
$$
whenever they appear in the above expressions.
Here and in the following addition and substraction on the first  component of a point in $\cC_N$ are understood to be modulo $R$.   The first factor in (\ref{defJ}) accounts for the presence or absence of a scatterer, the two others are there to guarantee that $\tau$ is an injective map and has thus orbits similar to Hamiltonian dynamics.

To each $(k,i)\in\cC_N$, one can attach an occupation variable $\sigma(k,i)\in\{0,1\}$.
The map $\tau:\cC_N\to\cC_N$ allows to define the evolution of those variables by the relation :
\begin{equation}
\sigma(k,i;t)=\sigma(\tau^{-t}(k,i);0),\; t\in\bbN^*.
\label{ocup}
\end{equation}
This is equivalent to the recursion relation :
\begin{eqnarray}
\sigma(k,i;t)&=&(1-J(k-1,i))(1-J(k-1,i-1))\sigma(k-1,i;t-1)\nonumber\\
&+&J(k-1,i-1)\sigma(k-1,i-1;t-1)+J(k-1,i)\sigma(k-1,i+1;t-1).\nonumber\\
\label{recurevol}
\end{eqnarray}
We note that at any time $t\in\bbN^*$, the configuration $\sigma(\cdot,t)$ is obtained as a permutation of the initial occupation variables $\sigma(\cdot;0)$. Thus the dynamics is {\it conservative}.

\begin{lemma}
$\tau$ is a well-defined bijective map of $\cC_N$ into $\cC_N$.  $\tau$ is therefore invertible.
\label{taumap}
\end{lemma}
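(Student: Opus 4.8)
The plan is to split the statement into three parts: that $\tau$ actually takes values in $\cC_N$ (well-definedness), that $\tau$ is injective, and that — since $\cC_N$ is finite — injectivity upgrades automatically to bijectivity and hence to invertibility.

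First I would check well-definedness, i.e.\ that the right-hand side of (\ref{taudef}) is a genuine point of $\cC_N$ and not a formal linear combination. The point is that the three coefficients $J(k,i)$, $J(k,i-1)$ and $(1-J(k,i))(1-J(k,i-1))$ are mutually exclusive Boolean quantities: from (\ref{defJ}), $J(k,i)=1$ forces $\xi(k,i-1)=0$ whereas $J(k,i-1)=1$ forces $\xi(k,i-1)=1$, so $J(k,i)$ and $J(k,i-1)$ are never simultaneously equal to $1$. Consequently exactly one of the three coefficients equals $1$ and the other two vanish, so that $\tau(k,i)\in\{(k+1,i+1),(k+1,i-1),(k+1,i)\}$; in particular its first component is always $k+1$ modulo $R$. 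I would also record that the boundary convention $\xi(k,-N-1)=\xi(k,N)=0$ guarantees that no particle leaves $\Lambda_N$: at $i=N$ one has $J(k,N)=0$, and at $i=-N$ the coefficient $J(k,-N-1)$ of the term pointing to site $-N-1$ vanishes.

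Next I would prove injectivity. If $\tau(k,i)=\tau(k',i')$, comparing first components gives $k\equiv k'$ mod $R$, hence $k=k'$ in $\{1,\dots,R\}$. Since the second component of $\tau(k,i)$ lies in $\{i-1,i,i+1\}$, if $i\neq i'$ then necessarily $|i-i'|\le 2$, and I would dispose of the cases $|i-i'|=1$ and $|i-i'|=2$ separately (the two signs being symmetric), each time deriving a contradiction on the value of a single scatterer variable $\xi(k,\cdot)$. For instance, when $i'=i+1$ the common second component can only be $i$ or $i+1$: the first forces simultaneously $J(k,i)=0$ (read off $\tau(k,i)$) and $J(k,i)=1$ (read off the left jump of $\tau(k,i+1)$), and the second forces both $J(k,i)=1$ and $J(k,i)=0$; when $i'=i+2$ the only possible common second component is $i+1$, which forces $\xi(k,i+1)=0$ through $J(k,i)=1$ and $\xi(k,i+1)=1$ through $J(k,i+1)=1$. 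Once injectivity is established, finiteness of $\cC_N$ yields surjectivity, and $\tau^{-1}$ therefore exists.

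The routine part is the well-definedness; the step requiring genuine care is the case analysis for injectivity — one must enumerate every way the image sets $\{i-1,i,i+1\}$ and $\{i'-1,i',i'+1\}$ can overlap and keep track of the values of $\xi$ and $J$ that are forced to $0$ at the two ends of $\Lambda_N$. None of this is deep, but it is the place where an oversight would most easily slip in.
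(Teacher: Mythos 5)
Your proposal is correct and follows essentially the same route as the paper: mutual exclusivity of the coefficients $J(k,i)$, $J(k,i-1)$ gives well-definedness, a case analysis on the forced values of $J$ and $\xi$ gives injectivity, and finiteness of $\cC_N$ upgrades this to bijectivity. The only cosmetic difference is that the paper organizes the injectivity cases by fixing the common image and examining its candidate preimages, while you fix two distinct sources and compare their images (and you make the boundary convention $\xi(k,-N-1)=\xi(k,N)=0$ explicit, which the paper leaves implicit).
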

\begin{proof}
We first note that for any configuration $\xi$ of scatterers $J(k,i)\in\{0,1\}$. Next only one of the three terms in (\ref{taudef}) is non-zero.  Indeed, assume that $J(k,i)=1$ then the third term is obviously zero but so is also the second one because one must have $\xi(k,i)=1$ and thus $J(k,i-1) =\xi(k-1,i)(1-\xi(k,i-2))(1-\xi(k,i))=0$. Other cases are treated similarly.  Thus, for any $\xi$, $\tau(k,i)\in\{(k+1,i+1),(k+1,i-1),(k+1,i)\}\subset\cC_N$ and $\tau$ is well-defined. We prove now that $\tau$ is injective. Let us assume that we have $x$ and $x'$ such that $\tau(x)=\tau(x')=(k,i)$, then both $x$ and $x'$ must belong to $\{(k-1,i),(k-1,i-1),(k-1,i+1)\}$.  Suppose that $x=(k-1,i)$, then because of the definition (\ref{taudef}), we must have $J(k-1,i)=J(k-1,i-1)=0$. This implies that $\tau(k-1,i-1)\neq (k,i)$ and $\tau(k-1,i+1)\neq (k,i)$ and thus that $x=x'=(k-1,i)$.  Now, if $x=(k-1,i-1)$ and $\tau(x)=(k,i)$, then $J(k-1,i-1)=1$ and $J(k-1,i)=0$.  Thus $\xi(k-1,i-1)=1$ and $\xi(k-1,i)=0$ and therefore $\tau(k-1,i)=(k,i-1)$ and $\tau(k-1,i+1)\neq(k,i)$. Thus, in that case also, if $x'$ is such that $\tau(x')=\tau(x)$ then $x'=x$. The last possible case (when $x=(k-1,i+1)$) is treated in a similar way.  We have thus proven that $\tau:\cC_N\to\cC_N$ is an injective map. It is obvious that $\tau$ is onto, we have thus shown that $\tau$ is a bijection.
\end{proof}
One could add a degree of freedom (``velocity") to the model by deciding that all particles move either in the positive (as above) or negative direction on the rings. We would then have two maps $\tau_+$ and $\tau_-$ with $\tau_+=\tau$ and obviously $\tau_-=\tau^{-1}$. The dynamics is thus {\it reversible} in a analogous sense to Hamilton dynamics.

To each $x\in \cC_N$, we associate an orbit 
 $$
 O(x)=\{z\in\cC_N: \exists n\in\bbN\,\;\tau^n(x)=z\}
 $$
 and a period
 $$
 T(x)=\inf\{n>0:\tau^n(x)=x\}.
 $$
Since the map $\tau$ is injective, the orbits of two different sites are either identical or do not intersect.
Moreover, any orbit is self-avoiding.
\begin{lemma}Every point of $\cC_N$ is periodic (i.e. has a finite period)  and we have,
$$
R\leq T(x)\leq R(2N+1),\;\;\forall x\in\cC_N.
$$

\end{lemma}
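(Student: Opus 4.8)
The plan is to exploit only two structural facts already in hand: $\cC_N$ is a finite set of cardinality exactly $R(2N+1)$, and $\tau$ is a bijection by Lemma~\ref{taumap}. First I would check that every $x\in\cC_N$ is periodic. Consider the orbit sequence $x,\tau(x),\tau^2(x),\dots$; since it lives in the finite set $\cC_N$, two of its terms must agree, say $\tau^a(x)=\tau^b(x)$ with $0\le a<b$. Composing with $\tau^{-a}$ (available since $\tau$ is invertible) gives $\tau^{b-a}(x)=x$ with $b-a>0$, so $\{n>0:\tau^n(x)=x\}$ is nonempty and $T(x)<\infty$.

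For the lower bound I would read off from \eqref{taudef} that, whichever of the three alternatives is realized, the first coordinate of the image of $(k,i)$ is $k+1$ modulo $R$. By an immediate induction the first coordinate of $\tau^n(k,i)$ is $k+n$ modulo $R$, so $\tau^n(x)=x$ forces $R\mid n$. Hence $T(x)$ is a positive multiple of $R$, and in particular $T(x)\ge R$.

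For the upper bound I would observe that the points $x,\tau(x),\dots,\tau^{T(x)-1}(x)$ are pairwise distinct: a coincidence $\tau^a(x)=\tau^b(x)$ with $0\le a<b\le T(x)-1$ would, by injectivity of $\tau$, yield $\tau^{b-a}(x)=x$ with $0<b-a<T(x)$, contradicting the minimality in the definition of $T(x)$. These $T(x)$ distinct points all lie in $\cC_N$, so $T(x)\le |\cC_N|=R(2N+1)$.

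I do not expect any genuine obstacle here: the whole argument is elementary pigeonhole together with injectivity of $\tau$. The only point deserving a moment's care is the assertion that \emph{every} point (not merely some point) is periodic, which is precisely what the bijectivity of $\tau$ on the finite set $\cC_N$ delivers.
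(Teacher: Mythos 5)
Your argument is correct and is essentially the paper's: periodicity from injectivity of $\tau$ on a finite set, the lower bound from the fact that the first coordinate advances by one (mod $R$) at each step so a full rotation is needed before return, and the upper bound from the orbit being a set of distinct points inside $\cC_N$, whose cardinality is $R(2N+1)$. Your version merely spells out the divisibility $R \mid T(x)$ and the distinctness of $x,\tau(x),\dots,\tau^{T(x)-1}(x)$ a bit more explicitly than the paper does.
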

\begin{proof}
 Since the cardinality of $\cC_N$ is finite, the fact that $\tau$ is injective implies that every orbit is periodic.  This is actually a special case of the Poincar\'e recurrence theorem and we do not repeat the proof.  An upper bound on the period is given by the number of sites contained in $\cC_N$.  A lower bound is obtained by observing that at each step one moves one step ahead on the rings.  Thus, before coming back to the same site one must have performed at least one full rotation and this gives the lower bound on the period.
 \end{proof}

\noindent We sum up some obvious but crucial facts in the following lemma, for further reference.
\begin{lemma}
\begin{enumerate}
\item $\tau^{-t}(k,i)$ depends on the configuration of scatterers only through the set of variables $\{\xi(k,i):\xi(k-n,i),\; n=1,\ldots, t,\;i=-N,\ldots,N\}$. 
\item $\sigma(k,i;t)$ is a function of $\{\xi(k,i):\xi(k-n,i),\; n=1,\ldots, t,\;i=-N,\ldots,N\}$ and $\sigma(\cdot,0)$.
\end{enumerate}
\label{dependance}
\end{lemma}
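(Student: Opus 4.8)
The plan is to prove both parts by induction on $t$, with part (1) carrying all the content and part (2) following at once from the definition (\ref{ocup}). The mechanism is the layered structure of $\tau$: by (\ref{taudef}) every image $\tau(k,i)$ has first coordinate $k+1$ (mod $R$), so $\tau^{-1}$ lowers the ring coordinate by exactly one, and one inverse step will turn out to introduce dependence on exactly one fresh ``row'' of scatterer variables.

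First I would record the precise form of one inverse step. Re-reading the injectivity analysis in the proof of Lemma \ref{taumap}, the preimage $\tau^{-1}(k,i)$ is the unique element of $\{(k-1,i),(k-1,i-1),(k-1,i+1)\}$ picked out by the two bits $J(k-1,i-1)$ and $J(k-1,i)$: it equals $(k-1,i-1)$ when $J(k-1,i-1)=1$, it equals $(k-1,i+1)$ when $J(k-1,i)=1$, and it equals $(k-1,i)$ when both vanish (these cases are exclusive, again by the computation in that proof). By (\ref{defJ}) each of $J(k-1,i-1)$ and $J(k-1,i)$ is a function of the variables $\xi(k-1,\cdot)$ in the single row $k-1$ (in fact only of $\xi(k-1,i-2),\dots,\xi(k-1,i+1)$). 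Hence $\tau^{-1}(k,i)$ is a function of $\{\xi(k-1,j):j=-N,\dots,N\}$ alone, which is the base case $t=1$ of part (1). For the inductive step, assume (1) holds for $t-1$ and write $\tau^{-(t-1)}(k,i)=(k-(t-1),m)$, so that by hypothesis the index $m$ is a function of $\{\xi(k-n,j):n=1,\dots,t-1,\ j=-N,\dots,N\}$. Applying the base case to the point $(k-(t-1),m)$ shows $\tau^{-t}(k,i)=\tau^{-1}(k-(t-1),m)$ is a function of $m$ and of the row $\{\xi(k-t,j):j=-N,\dots,N\}$; composing gives that $\tau^{-t}(k,i)$ depends only on $\{\xi(k-n,j):n=1,\dots,t,\ j=-N,\dots,N\}$, which is (1). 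Part (2) is then immediate: by (\ref{ocup}) we have $\sigma(k,i;t)=\sigma(\tau^{-t}(k,i);0)$, and since the point $\tau^{-t}(k,i)$ depends only on the listed scatterer variables, $\sigma(k,i;t)$ is a function of those variables together with the initial field $\sigma(\cdot;0)$.

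I do not expect a genuine obstacle here — as the text says, this is a bookkeeping statement — but the two points deserving care are: confirming that $\tau^{-1}$ shifts the ring coordinate by exactly one, so that the inductive accounting of rows is tight; and tracking the reduction modulo $R$ of the labels $k-n$, which when $t\ge R$ only makes the indicated family of variables redundant, never insufficient. One could instead derive (2) directly by induction from the recursion (\ref{recurevol}), reading off from it that $\sigma(k,i;t)$ is built from $\sigma(\cdot;t-1)$ and the row $J(k-1,\cdot)$, hence from row $\xi(k-1,\cdot)$; but routing everything through (1) is shorter and reuses Lemma \ref{taumap} verbatim.
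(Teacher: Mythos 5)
Your argument is correct. Note that the paper itself offers no proof of Lemma \ref{dependance} (it is presented as a summary of ``obvious but crucial facts''), so there is nothing to compare against; your induction supplies exactly the missing bookkeeping. The base case is right for the right reason: by (\ref{taudef}) every image under $\tau$ has first coordinate advanced by one, so $\tau^{-1}(k,i)$ lies in $\{(k-1,i),(k-1,i\pm1)\}$ and is selected by the mutually exclusive bits $J(k-1,i-1)$ and $J(k-1,i)$, which by (\ref{defJ}) involve only the row $\xi(k-1,\cdot)$; this is precisely the case analysis already carried out in the proof of Lemma \ref{taumap}, so reusing it is legitimate. The inductive step and the deduction of part (2) from (\ref{ocup}) are immediate, and your remark that reduction modulo $R$ can only make the listed family redundant (the relevant regime in the main theorem being $t<R^{\alpha}<R$) is the only point of care. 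The alternative you mention, proving (2) directly from the recursion (\ref{recurevol}), is equally valid but proves slightly less, since part (1) is what is invoked later for the independence statement (\ref{recollisions}); your route through (1) is the appropriate one.
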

To summarize : for {\it every} configuration of scatterers, the dynamics has all the characteristic of  a Hamiltonian dynamics in a confined domain : it is deterministic,  reversible, (quasi-) periodic and conservative.  
\section{Randomness and diffusion equation}
We let randomness enter the dynamics by taking a random configuration of scatterers and random initial occupation variables.  Then $\tau:\cC_N\to\cC_N$ becomes a random map and (\ref{ocup}) implies that the configuration $\sigma(\cdot,t)$ at any time is simply a random permutation of a collection of random variables.  Likewise, orbits become non-intersecting random loops on $\cC_N$.
Scatterers stand for interactions between individual components, namely the rings.  In the absence of any macroscopic information on those interactions except its average ``density", the most natural choice is to take the scatterers $\xi(k,i)$ to be a set of independent Bernoulli variables of parameter $0<\mu<1$.  
Similar considerations lead to take the initial configuration of occupation variables as a set of independent Bernoulli random variables but with a ring-dependent parameter to account for an initial inhomogeneous distribution of density of particles.
We define the empirical density of particles on the rings $\cR_i$ which are our main ``macroscopic" quantities :
\begin{equation}
\rho^R(i,t):=\frac 1 R\sum_{k=1}^{R}\sigma(k,i;t).
\end{equation}
Our goal is to show that for times $t$ smaller than the size $R$ of the rings, this empirical density follows the solution of a diffusion equation, with large probability as $R$ grows to infinity.
As we indicate in the last section, it is easy to exhibit configurations of scatterers that would lead to an anomalous behaviour.

Let $\hat \rho(i,t)$ be a solution of the system of discrete-time evolution equations :
\begin{equation}
\left\{\begin{array}{lll}
&\hat\rho(i,t)-\hat\rho(i,t-1)=\mu(1-\mu)^2\left(\hat\rho(i-1,t-1)+\hat\rho(i+1,t-1)-2\hat\rho(i,t-1)\right),\; |i|<N\\
&\hat\rho(-N,t)-\hat\rho(-N,t-1)=\mu(1-\mu)(\hat\rho(-N+1,t-1)-\hat\rho(-N,t-1))\\
&\hat\rho(N,t)-\hat\rho(N,t-1)=\mu(1-\mu)(\hat\rho(N-1,t-1)-\hat\rho(N,t-1))
\end{array}\right.
\label{diffusion}
\end{equation}
This is a discrete laplace equation in discrete time.  Given initial data,  the solution is unique for any positive integer $t$.
We state now our main result.
\begin{theorem} Let the $\{\sigma(k,i,0): (k,i)\in \cC_N\}$ be a set of independent Bernoulli random variables  and $\{\hat\rho_i:0<\hat \rho_i<1, \; i\in\Lambda_N\}$ such that
$\bbE(\sigma(k,i,0))=\hat\rho_i$\,, $\forall k \in \{1,\ldots,R\}$. Let  also $\hat\rho(\cdot,t)$ be the solution of the above system with initial condition $\hat\rho(i,0)=\hat\rho_i$, then
$\forall \epsilon>0$ and $\forall 0<\alpha<1$,
$$
\lim_{R\to\infty}\sup_{t\in[0,R^\alpha]}\bbP\left[\bigcup_{i=-N}^N\{|\rho^R(i,t)-\hat\rho(i,t)|>\epsilon\}\right]=0.
$$
\end{theorem}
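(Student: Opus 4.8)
\medskip
\noindent\emph{Proof strategy.} The statement is a weak law of large numbers for the empirical densities $\rho^R(i,t)=\frac1R\sum_{k=1}^R\sigma(k,i;t)$, so since $\Lambda_N$ is a fixed finite set it suffices, by a union bound, to prove that for each fixed $i\in\Lambda_N$ one has $\sup_{0\le t\le R^\alpha}\bbP\big[|\rho^R(i,t)-\hat\rho(i,t)|>\epsilon\big]\to0$ as $R\to\infty$. I would obtain this from Chebyshev's inequality, which needs (i) $\bbE\,\rho^R(i,t)=\hat\rho(i,t)$ for $t<R$, and (ii) $\mathrm{Var}\big(\rho^R(i,t)\big)\to0$ uniformly over $0\le t\le R^\alpha$; note that for $R>1$ and $0<\alpha<1$ every admissible $t$ obeys $t\le R^\alpha<R$, so (i) applies.

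For (i): by (\ref{ocup}) we have $\sigma(k,i;t)=\sigma(\tau^{-t}(k,i);0)$, and $\tau$ is a function of the scatterers alone, which are independent of the initial occupation variables, while $\bbE(\sigma(k',j;0))=\hat\rho_j$ depends only on the ring index $j$. Denoting by $\phi_t(k,i)$ the ring index of $\tau^{-t}(k,i)$, this yields $\bbE\big(\sigma(k,i;t)\mid\xi\big)=\hat\rho_{\phi_t(k,i)}$ and hence $\bbE\,\sigma(k,i;t)=\sum_{j\in\Lambda_N}\bbP[\phi_t(k,i)=j]\,\hat\rho_j$. The key point is that, for $t<R$, the sequence of ring indices visited by the backward orbit of $(k,i)$ is a genuine Markov chain started at $i$: by Lemma \ref{dependance} its $s$-th step (from level $k-s+1$ to level $k-s$) is governed by the scatterers at level $k-s$ only, and for $t<R$ the levels $k-1,\ldots,k-t$ are pairwise distinct modulo $R$, so the families of scatterers used at distinct steps are independent. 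From (\ref{taudef})--(\ref{defJ}) its one-step transition probabilities are $\bbP[J(\cdot,i-1)=1]$ for $i\mapsto i-1$ and $\bbP[J(\cdot,i)=1]$ for $i\mapsto i+1$, and a short computation (using the convention $\xi(k,-N-1)=\xi(k,N)=0$ at the endpoints) shows these reproduce exactly the coefficients of the discrete Laplacian (\ref{diffusion}). Since the law of $\phi_t(k,i)$ does not depend on $k$ (the scatterers are i.i.d.\ across levels), $\bbP[\phi_t(k,i)=j]$ equals the $t$-step transition kernel at $(i,j)$, so $t\mapsto\bbE\,\sigma(k,i;t)$ solves (\ref{diffusion}) with initial datum $\hat\rho_i$; by uniqueness of that solution it equals $\hat\rho(i,t)$, and averaging over $k$ gives $\bbE\,\rho^R(i,t)=\hat\rho(i,t)$.

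For (ii): write $\mathrm{Var}\big(\rho^R(i,t)\big)=\frac1{R^2}\sum_{k,k'=1}^R\mathrm{Cov}\big(\sigma(k,i;t),\sigma(k',i;t)\big)$ and count the nonzero terms. By (\ref{ocup}) and Lemma \ref{dependance}, $\sigma(k,i;t)$ is a function only of the scatterers on levels $k-1,\ldots,k-t$ and of the initial occupation variables on the single level $k-t$. When the cyclic distance between $k$ and $k'$ exceeds $t$, these two families are disjoint from the corresponding ones attached to $k'$ — in particular, even if the two backward orbits terminate on the same ring, they read initial occupations on the \emph{distinct} levels $k-t$ and $k'-t$ — hence $\sigma(k,i;t)$ and $\sigma(k',i;t)$ are independent and the covariance vanishes. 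So at most $(2t+1)R$ pairs $(k,k')$ contribute, each covariance is at most $\tfrac14$ because the variables lie in $\{0,1\}$, and therefore $\mathrm{Var}\big(\rho^R(i,t)\big)\le\frac{2t+1}{4R}$.

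Combining the two ingredients, Chebyshev's inequality gives $\bbP\big[|\rho^R(i,t)-\hat\rho(i,t)|>\epsilon\big]\le\frac{2t+1}{4R\epsilon^2}$, whence, by the union bound over the $2N+1$ rings, $\bbP\big[\bigcup_{i=-N}^N\{|\rho^R(i,t)-\hat\rho(i,t)|>\epsilon\}\big]\le(2N+1)\frac{2t+1}{4R\epsilon^2}$; for $t\le R^\alpha$ with $0<\alpha<1$ and $N$ fixed this is at most $(2N+1)\frac{2R^\alpha+1}{4R\epsilon^2}$, which tends to $0$ uniformly in $t$, proving the theorem. The main obstacle I anticipate is in (i): one must justify carefully that the backward orbit is truly Markovian — this is precisely where the condition $t<R$, i.e.\ the absence of any ``recollision'' with an already-visited level of scatterers, is indispensable — and that its one-step law, including at the two endpoints, matches (\ref{diffusion}); the variance bound (ii) is comparatively routine, and it is there that $t$ must be kept $o(R)$ in order for the fluctuations to disappear.
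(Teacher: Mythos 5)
Your proof is correct and follows essentially the same route as the paper's: a union bound over the finitely many rings, an exact computation showing $\bbE[\rho^R(i,t)]$ satisfies the discrete diffusion recursion for $t<R$, a variance bound of order $t/R$ coming from the independence of $\sigma(k,i;t)$ and $\sigma(k',i;t)$ when $k,k'$ are at cyclic distance greater than $t$ (absence of recollisions), and Chebyshev. The differences are cosmetic — you phrase the mean via the backward-orbit Markov chain where the paper takes expectations of the recursion (\ref{recurevol}), and your direct covariance count is a slightly cleaner version of the paper's conditioning on $\tau^{-t}$ — and you also inherit the paper's one small boundary glitch: since $\xi(k,N)=\xi(k,-N-1)=0$, the jump probability across the bonds adjacent to $\pm N$ is $\mu(1-\mu)$ rather than $\mu(1-\mu)^2$, so the one-step kernel at $i=\pm(N-1)$ does not literally match the interior equation of (\ref{diffusion}) as written.
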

\begin{proof}
By sub-additivity, it's enough to show :
\begin{equation}
\lim_{R\to\infty}\sup_{t\in[0,R^\alpha]}\bbP\left[\{|\rho^R(i,t)-\hat\rho(i,t)|>\epsilon\}\right]=0, \;i\in\Lambda_N.
\label{sub}
\end{equation}  
First, we show that
\begin{equation}
\bbE[\rho^R(i,t)]=\hat\rho(i,t),\;i\in\Lambda_N,\; 0<t<R^\alpha.
\label{average}
\end{equation}
We consider the cases $i\neq -N,N$, it is easy to see that the cases $i=-N,N$ are similar.
Using (\ref{recurevol}) and noting that :
$$
J(k-1,i)J(k-1,i-1)=0
$$
and 
$$
\bbE[J(k-1,i)]=\bbE[J(k-1,i-1)]=\mu(1-\mu)^2,\;\forall\; 1\leq k\leq R,
$$
we compute from (\ref{recurevol}) and (\ref{defJ}):
$$
\bbE[\rho^R(i,t)]-\bbE[\rho^R(i,t-1)]=\mu(1-\mu)^2\left(\bbE[\rho^R(i-1,t-1)]+\bbE[\rho^R(i+1,t-1)-2\bbE[\rho^R(i,t-1)]\right).
$$
We have used $t<R^\alpha<R$ so that $\sigma(k,i;t)$ and $\xi(k,j)$ are independent for all $k,i,j$ by Lemma (\ref{dependance}).  
Since $\bbE[\rho^R(i,0)]=\hat\rho_i$ and because the solution to (\ref{diffusion}) is unique, the relation (\ref{average}) is proven.
We now look at the variance of $\rho^R(i,t)$.
\begin{eqnarray}
{\rm Var}[\rho^R(i,t)]&=&\frac 1 {R^2}\bbE[\left(\sum_{k=1}^R\sigma(k,i;t)-\sum_{k=1}^R\bbE[\sigma(k,i;t)]\right)^2]\nonumber\\
&=&\frac 1{R^2}\left(\bbE[\sum_{k,k'=1}^R\sigma(k,i;t)\sigma(k',i;t)]-(\sum_{k=1}^R\bbE[\sigma(k,i;t)])^2\right)
\label{variance}
\end{eqnarray}
We can write for the second term :
\begin{eqnarray}
\bbE[\sigma(k,i;t)]&=&\sum_{x\in\cC_N}\bbE[\sigma(\tau^{-t}(k,i);0)|\tau^{-t}(k,i)=x]\bbP[\tau^{-t}(k,i)=x]\nonumber\\
&=&\sum_{x\in\cC_N}\bbE[\sigma(x;0)]\bbP[\tau^{-t}(k,i)=x].
\end{eqnarray}
Proceeding in the same way for the first one we obtain,
\begin{eqnarray}
\bbE[\sigma(k,i;t)\sigma(k',i;t)]&=&\sum_{x,x'\in\cC_N}\bbE[\sigma(x;0) \sigma(x';0)] \bbP[\tau^{-t}(k,i)=x,\tau^{-t}(k',i)=x'].\nonumber\\
\end{eqnarray}
When $k\neq k'$, we get :
\begin{eqnarray}
\bbE[\sigma(k,i;t)\sigma(k',i;t)]&=&\sum_{x\neq x'\in\cC_N}\bbE[\sigma(x;0)]\bbE[\sigma(x';0)] \bbP[\tau^{-t}(k,i)=x,\tau^{-t}(k',i]=x')\nonumber\\
\end{eqnarray}
because if $k\neq k'$, then $\tau^{-t}(k,i)\neq\tau^{-t}(k',i)$ for any configuration of scatterers, since $\tau$ (and $\tau^{-t}$) is a bijection.  The factorization of the expectation is obtained because the initial occupation variables are independent.
Going back to (\ref{variance}), we take the first expectation :
\begin{eqnarray}
\bbE[\sum_{k,k'=1}^R\sigma(k,i;t)\sigma(k',i;t)]&=&\sum_{k=1}^R\sum_{x\in\cC_N}\bbE[\sigma^2(x;0)]\bbP[\tau^{-t}(k,i)=x]\nonumber\\
&+& \sum_{k\neq k'\in\cC_N}\sum_{x\neq x'\in\cC_N}\bbE[\sigma(x;0)]\bbE[ \sigma(x';0)] \bbP[\tau^{-t}(k,i)=x,\tau^{-t}(k',i)=x']\nonumber\\
\label{firstterm}
\end{eqnarray}
Since $\bbE[\sigma^2(x;0)]\leq 1$, the first term above may be bounded by $R$ and we get :
\begin{equation}
\sum_{k,k'=1}^R\bbE[\sigma(k,i;t)\sigma(k',i;t)]\leq R+\sum_{k\neq k'}\sum_{x\neq x'\in\cC_N}\bbE[\sigma(x;0)]\bbE[ \sigma(x';0)] \bbP[\tau^{-t}(k,i)=x,\tau^{-t}(k',i)=x'].
\label{firsttermbound}
\end{equation}
For the second term of (\ref{variance}), we have :
\begin{eqnarray}
(\sum_{k=1}^R\bbE[\sigma(k,i;t)])^2
&\geq&\sum_{k\neq k'}\sum_{x,x'\in\cC_N}\bbE[\sigma(x;0)]\bbE[ \sigma(x';0)] \bbP[\tau^{-t}(k,i)=x]\bbP[\tau^{-t}(k',i)=x']\nonumber\\
\label{secondtermbound}
\end{eqnarray}
Combining (\ref{firsttermbound}) and (\ref{secondtermbound}), we get for the variance (\ref{variance})
\begin{eqnarray}
{\rm Var}[\rho^R(i,t)]\leq\frac 1 R+&&\frac 1{R^2}|\sum_{k\neq k'}\sum_{x,x'\in\cC_N}\bbE[\sigma(x;0)]\bbE[ \sigma(x';0)]\cdot\nonumber\\
&&\left( \bbP[\tau^{-t}(k,i)=x,\tau^{-t}(k',i)=x']-\bbP[\tau^{-t}(k,i)=x]\bbP[\tau^{-t}(k',i)=x']\right)|\nonumber\\
\end{eqnarray}
Using rotational invariance of both the distribution of the scatterers and the distribution of initial occupation variables, we get

\begin{eqnarray}
{\rm Var}[\rho^R(i,t)]\leq\frac 1 R+&&\frac 1{R}|\sum_{ k'\neq 1}\sum_{x,x'\in\cC_N}\bbE[\sigma(x;0)]\bbE[ \sigma(x';0)]\cdot\nonumber\\
&&\left( \bbP[\tau^{-t}(1,i)=x,\tau^{-t}(k',i)=x']-\bbP[\tau^{-t}(1,i)=x]\bbP[\tau^{-t}(k',i)=x']\right)|.\nonumber\\
\end{eqnarray}
Now, by Lemma (\ref{dependance}), if  $t+1<k'\leq R-t+1$ then $\tau^{-t}(0,i)$ and $\tau^{-t}(k',i)$ are independent random variables and for those $k'$, we have :
\begin{equation}
\bbP[\tau^{-t}(1,i)=x,\tau^{-t}(k',i)=x']-\bbP[\tau^{-t}(1,i)=x]\bbP[\tau^{-t}(k',i)=x']=0.
\label{recollisions}
\end{equation}

We are thus left with :
\begin{eqnarray}
{\rm Var}[\rho^R(i,t)]\leq&&\frac 1 R+\frac 1{R}\sum_{\begin{array}{ll} {\scriptstyle R-t+1< k'\leq R}\\{\scriptstyle  1<k'\leq  t+1}\end{array} }\sum_{x,x'\in\cC_N} \bbP[\tau^{-t}(1,i)=x,\tau^{-t}(k',i)=x']\nonumber\\
&+&\frac 1 R\sum_{\begin{array}{ll} {\scriptstyle R-t+1< k'\leq R}\\{\scriptstyle  1<k'\leq  t+1}\end{array} }\sum_{x,x'\in\cC_N}\bbP[\tau^{-t}(1,i)=x]\bbP[\tau^{-t}(k',i)=x']\nonumber\\
\leq && \frac 1 R+\frac {4(t-1)}{R}\nonumber\\
\leq&&\frac 6 {R^{1-\alpha}}, {\rm for } \;R\; {\rm large \; enough}.
\end{eqnarray}
In the first inequality, we have used independence and $\bbE[\sigma(x;0)]\leq 1$, in the second one the fact that probabilities sum up to $1$ and the fact that there are $2t-2$ terms in each sum over $k'$.  Finally, we have used the hypothesis that $t<R^\alpha$.  We conclude by using Chebyshev's inequality and get :
\begin{equation}
\bbP\left[\{|\rho^R(i,t)-\hat\rho(i,t)|>\epsilon\}\right]\leq \frac{6}{\epsilon^2 R^{1-\alpha}}.
\label{final}
\end{equation}
This in turn, implies (\ref{sub}).

\end{proof}
\section{Relation to Boltzmann equation}
By using a ``molecular chaos" hypothesis, one can derive heuristically a Boltzmann equation for the evolution  of the proportion of occupied sites on ring $\rho^R(i,t)$.  This equation coincides with the diffusion equation (\ref{diffusion}).  We start by an exact relation for the evolution of the number of particles contained in $\cR_i$, $|i|< N$.
\begin{eqnarray}
R\rho^R(i,t+1)-R\rho^R(i,t)&=&X_{(i+1)\to i}(t)-X_{i\to (i+1)}(t)\nonumber\\
&+&X_{(i-1)\to i}(t)-X_{i\to( i-1)}(t)\nonumber\\
\label{xdef}
\end{eqnarray}
$X_{i\to (i+1)}(t)$ counts the number of pairs of sites on rings $\cR_i$ and $\cR_{i+1}$ for which, at time $t$, a particle will jump from ring $\cR_i$ to $\cR_{i+1}$ and no particle jump from   $\cR_{i+1}$ to $\cR_{i}$.  
Namely,
\begin{equation}
X_{i\to i+1}(t)=|\{k\in\{1,\ldots,R\}: \sigma(k,i,t)=1,\;\sigma(k,i+1,t)=0, J(k,i)=1\}|.
\end{equation}
We define also
\begin{equation}
\hat X_{i\to i+1}(t)=|\{k\in\{1,\ldots,R\}: \sigma(k,i,t)=1,\;\sigma(k,i+1,t)=0\}|.
\end{equation}

All other $X$'s appearing in (\ref{xdef}) are defined in an analogous way. In this context, the molecular chaos hypothesis amounts to assume :
\begin{enumerate}
\item The proportion $\hat X_{i\to i+1}(t)/R$ is independent of the scatterer distribution between the rings and therefore :
\begin{eqnarray}
X_{i\to i+1}(t)/R&\simeq&\frac 1 R\sum_{k=1}^RJ(k,i)\; \hat X_{i\to i+1}(t)/R\nonumber\\
&\simeq&\bbE[J(k,i)]\; \hat X_{i\to i+1}(t)/R\nonumber\\
&\simeq&\mu(1-\mu)^2\hat X_{i\to i+1}(t)/R
\end{eqnarray}
as $R\to\infty$.
\item The proportion  $\hat X_{i\to i+1}(t)/R$ is given by the product of the proportions of occupied sites on $\cR_i$ and vacant sites on $\cR_{i+1}$ :
$$
\hat X_{i\to i+1}(t)/R=\rho^R(i,t)(1-\rho^R(i+1,t))
$$
\end{enumerate}
The first assumption is similar to the one made in the Kac model. The second one is more similar to a ``genuine" Boltzmann assumption on the factorization of distribution of pairs of particles.
One is thus led to write :
$$
X_{i\to (i+1)}(t)/R=\mu(1-\mu)^2\rho^R(i,t)(1-\rho^R(i+1,t)),\; i<N.
$$
A simple computation leads then to the evolution equation for $|i|<N$.
$$
\rho^R(i,t)-\rho^R(i,t-1)=\mu(1-\mu)^2\left(\rho^R(i-1,t-1)+\rho^R(i+1,t-1)-2\rho^R(i,t-1)\right).
$$
This is identical to (\ref{diffusion}) for $|i|<N$, equations for the evolution of the boundary densities are derived in a similar way.
\section{Conclusions}
We observe that taking a diffusive scaling limit  (when $N\to\infty$) of the discrete-time discrete-space diffusion equation (\ref{diffusion}) will yield the usual diffusion equation
\begin{equation}
\partial_t\rho(x,t)=\mu(1-\mu)^2\partial^2_x\rho(x,t),\; x\in[0,1], t\in\bbR_+,
\label{heat}
\end{equation}
with Neuman boundary conditions. A more challenging task would be to derive (\ref{heat}) for a fixed parameter $R$ directly in the diffusive scaling limit (when $N\to\infty$).  In that case, the absence of ``collisions" of sufficiently distant paths expressed in (\ref{recollisions}) is no more automatic and one should rely on a decay of spatio-temporal correlations.  In order to obtain such a decay, it might be necessary to define the model on a higher-dimensional lattice. More generally, the possibilities of  modifying the parameters of the model are numerous : one could also take correlated distribution of scatterers and study the effect of the correlations on the conduction properties of the model.  

We note  that it is easy to find interactions (namely a fixed distribution of scatterers) between rings such that the empirical densities do not follow the diffusion equation (\ref{diffusion}).  In contrast to the seemingly ``chaotic" orbits that are typical of systems obeying normal diffusion, those configurations of scatterers tend to have a long-range order and give rise to very ordered orbits.  In the set of possible interactions they represent ``integrable" dynamics.  By analogy, this might teach us a useful point to keep in mind  when studying systems described  by real Hamiltonian dynamics.  We tend to conceive and study models that are (random or not) perturbations of  dynamics that are ``simple" or lend themselves to computations .  But those systems might be rather untypical in the space of all possible interactions or indeed in real physical systems that obey the ordinary laws of macroscopic physics.

\vspace{5mm}
\noindent {\bf Acknowledgments.}  I thank Jean Bricmont and Carlos Mejia-Monasterio for discussions.
This work was supported by the ANR SHEPI grant.

\end{document}